\DeclarePairedDelimiter\ceil{\lceil}{\rceil}
\DeclarePairedDelimiter\floor{\lfloor}{\rfloor}
\newtheorem{theorem}{Theorem}
\newtheorem{example}{Example}
\newtheorem{remark}{Remark}
\theoremstyle{definition}
\begin{document}
	\newgeometry{left=0.7in,right=0.7in,top=.72in,bottom=1.04in}
	\title{Improving Achievability of Cache-Aided Private Variable-Length Coding with Zero Leakage}
\vspace{-5mm}
\author{
		\IEEEauthorblockN{Amirreza Zamani, Mikael Skoglund \vspace*{0.5em}
			\IEEEauthorblockA{\\
                              Division of Information Science and Engineering, KTH Royal Institute of Technology \\
				Email: \protect amizam@kth.se, skoglund@kth.se }}
		}
	\maketitle
%
\begin{abstract} 
	A statistical cache-aided compression problem with a privacy constraint is studied, where a server has access to a database of $N$ files, $(Y_1,...,Y_N)$, each of size $F$ bits and is linked through a shared channel to $K$ users, 
	where each has access to a local cache memory of size $MF$ bits. During the placement phase, the server fills the users' caches without prior knowledge of their demands, while the delivery phase takes place after the users send their demands to the server. 
	We assume that each file in database $Y_i$ is arbitrarily correlated with a private attribute $X$, and an adversary is assumed to have access to the shared channel. The users and the server have access to a shared key $W$.  
	The goal is to design the cache contents and the delivered message $\cal C$ 
	such that the average length of $\mathcal{C}$ is minimized, while satisfying: 
i. The response $\cal C$ does not reveal any information about $X$, i.e., $I(X;\mathcal{C})=0$; ii. User $i$ can decode its demand, $Y_{d_i}$, by using the shared key $W$, $\cal C$, and its local cache $Z_i$. 
	 In a previous work, we have proposed a variable-length coding scheme that combines privacy-aware compression with coded caching techniques. 
	 In this paper, we propose a new achievability scheme using minimum entropy coupling concept and a greedy entropy-based algorithm. We show that the proposed scheme improves the previous results. Moreover, considering two special cases we improve the obtained bounds using the common information concept. 
\end{abstract}
\begin{IEEEkeywords}
	Cache-aided networks, private variable-length coding, minimum entropy functional representation. 
	\end{IEEEkeywords}
\section{Introduction}
 We consider the same scenario as in \cite{amircache} illustrated in Fig. \ref{wii}, in which a server has access to a database consisting of $N$ files $Y_1,..,Y_N$. Each file, of size $F$ bits, is sampled from the joint distribution $P_{XY_1\cdot Y_N}$, where $X$ denotes the private attribute. We assume that the server knows the realization of the private variable $X$ as well. The server is linked to $K$ users over a shared channel, where user $i$ has access to a local cache memory of size $MF$ bits. Furthermore, we assume that the server and the users have access to a shared key denoted by $W$, of size $T$. Similar to \cite{maddah1}, the system works in two phases: the placement and delivery phases. In the placement phase, the server fills the local caches using the database. After the placement phase, let $Z_k$ denote the content of the local cache of user $k$, $k\in[K]\triangleq\{1,..,K\}$. In the delivery phase, first the users send their demands to the server, where $d_k\in[N]$ denotes the demand of user $k$. The server sends a response, denoted by $\mathcal{C}$, over the shared channel to satisfy all the demands, simultaneously. We assume that an adversary has access to the shared link as well, and uses $\cal C$ to extract information about $X$. However, the adversary does not have access to the local cache contents or the shared secret key. 
 As argued in \cite{amircache}, since the files in the database are all correlated with the private latent variable $X$, the coded caching and delivery techniques introduced in \cite{maddah1} do not satisfy the privacy requirement. Similar to \cite{amircache}, the goal of the cache-aided private delivery problem is to find a response $\mathcal{C}$ with minimum possible average length that satisfies a certain privacy constraint and the zero-error decodability constraint of users. Similar to \cite{amircache}, we consider the worst case demand combinations $d=(d_1,..,d_K)$ to construct $\cal C$, and the expectation is taken over the randomness in the database. We consider a perfect privacy constraint, i.e., we require $I(\mathcal {C};X)=0$. Let $\hat{Y}_{d_k}$ denote the decoded message of user $k$ using $W$, $\cal C$, and $Z_k$. User $k$ should be able to decode $Y_{d_k}$ reliably, i.e., $\mathbb{P}{\{\hat{Y}_{d_k}\neq Y_{d_k}\}}=0$, $\forall k\in[K]$. 
  In \cite{amircache}, we have utilized techniques used in privacy mechanisms, data compression, and cache design and coded delivery problems, and combine them to build such a code. 
  In particular, we have used data compression techniques employed in \cite{kostala} and caching design techniques in \cite{maddah1}. In this work, to build $\cal C$, we use the minimum entropy coupling concept and a greedy entropy-based algorithm that are studied in \cite{kocaoglu2017entropic, compton2023minimum, shkel2023information}. 
  We compare the new proposed scheme with the existing one in \cite{amircache} and show that the proposed achievable scheme can significantly improve the previous result. 
\begin{figure}[]
	\centering
	\includegraphics[scale = .3,trim={0 5cm 0 5cm}]{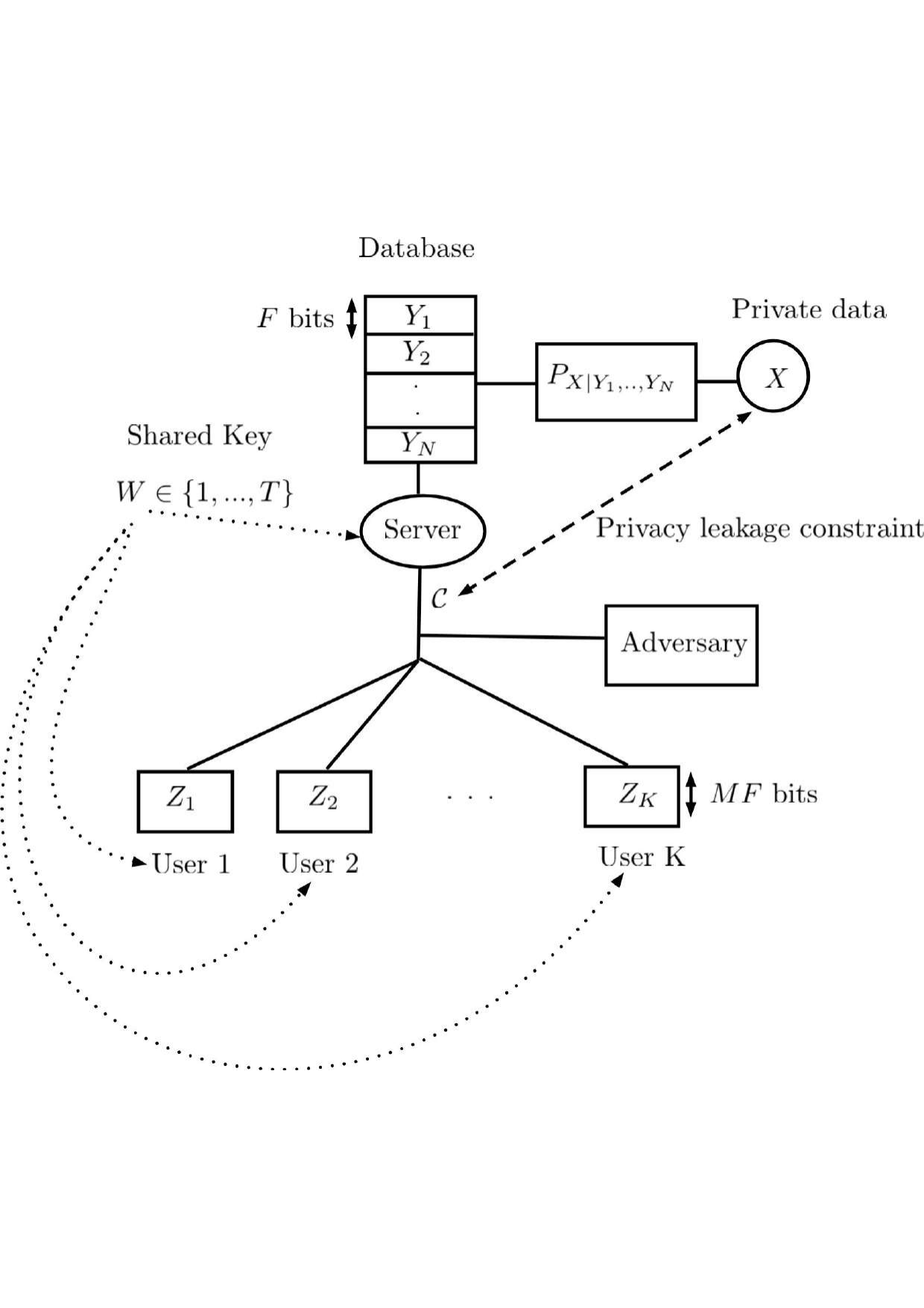}
	\caption{In this work a server wants to send a response over a shared link to satisfy users' demands, where the database is correlated with the private data. In the delivery phase, we hide the information about $X$ using one-time-pad coding and send the rest of response using a greedy entropy-based algorithm proposed in \cite{kocaoglu2017entropic}.}
	\label{wii}
\end{figure}
The privacy mechanism, caching and compression design problems have received increased attention in recent years
\cite{amircache, maddah1, kocaoglu2017entropic, compton2023minimum, shkel2023information, lim, wang33,  lu, denizjadid, shannon, borz, khodam, Khodam22,kostala, kostala2, asoodeh1,king3,9457633,e25040679,Total,shahab,makhdoumi}. 
Specifically, in \cite{maddah1}, a cache-aided network consisting of a single server connected
to several users equipped with local caches over a shared error-free link, is
considered, and the rate-memory trade-off has been characterized within a constant gap. This bound has been improved
for different scenarios in \cite{lim,wang33}. In particular, the exact rate-memory trade-off for uncoded placement has been characterized in \cite{lu}. 
A cache-aided coded content delivery problem is studied in \cite{denizjadid} where users have different distortion requirements.

Considering the compression problem with privacy constraints, a notion of perfect secrecy is introduced in \cite{shannon} by Shannon, where the private and public data are statistically independent. In the Shannon cipher system, one of $M$ messages is sent over a channel wiretapped by an eavesdropper, and it is shown that perfect secrecy is achievable if and only if the shared secret key length is at least $M$ \cite{shannon}. 


In \cite{borz}, the problem of privacy-utility trade-off considering mutual information both as measures of utility and privacy is studied. It is shown that under the perfect privacy assumption, the optimal privacy mechanism problem can be obtained as the solution of a linear program. 
In \cite{khodam}, the work \cite{borz} is generalized by relaxing the perfect privacy assumption allowing some small bounded leakage. 
This result is generalized to a non-invertible leakage matrix in \cite{Khodam22}.

In \cite{kostala}, the \emph{secrecy by design} concept has been introduced and is applied to privacy mechanism and lossless compression design problems. For the privacy problem, bounds on privacy-utility trade-off are derived by using the Functional Representation Lemma. These results are derived under the perfect privacy assumption.
The privacy problems considered in \cite{kostala} are generalized in \cite{king3} by relaxing the perfect privacy constraint. 
Moreover, the problems of fixed length and variable length compression with a certain privacy constraint have been studied in \cite{kostala} and upper and lower bounds on the average length of encoded message have been derived. 
A similar approach has been used in \cite{kostala2}, where in a lossless compression problem the relations between shared key, secrecy, and compression considering perfect secrecy, maximal leakage, secrecy by design, mutual information leakage and local differential privacy have been studied.
In \cite{amircache}, we have studied a cache-aided coded content delivery problem with a certain privacy constraint which is closely related to \cite{maddah1} and \cite{kostala}. We have provided upper and lower bounds on the average length of the server's response $\cal C$. Specifically, we have generalized the 
problem considered in \cite{maddah1} by considering correlation between the database and the private attribute, and we have used
variable-length lossless compression techniques as in \cite{kostala} to build the response $\cal C$ in the presence of an adversary. As argued in \cite{amircache}, for the achievability scheme we use a two-part code construction, which is based on the Functional Representation Lemma (FRL) and one-time pad coding to hide the information about $X$ and reconstruct the demanded files at user side.  

The main contribution of this work is to improve the acheivabilty scheme obtained in \cite{amircache}. Here, the acheivabilty scheme corresponds to the upper bound on the average length of the response $\cal C$. To do so we use a two-part code construction, which is based on the greedy entropy-based algorithm proposed in \cite{kocaoglu2017entropic} and one-time pad coding to hide the information about $X$ and reconstruct the demanded files at user side. Finally, considering two special cases, we improve the obtained bounds by using the common information concept. We show that when the size of the private data is large, the obtained bounds can be significantly improved using less shared key size. 

\section{system model and Problem Formulation} \label{sec:system}
In this work, the $i$-th file in the database is denoted by $Y_i$, which is of size $F$ bits, i.e., $\mathcal{Y}_i\in\{1,\ldots,2^F\}$ and $|\mathcal{Y}_i|=2^F$. Similar to \cite{amircache} we assume that $N\geq K$; however, the results can be generalized to other cases as well. The discrete random variable (RV) $X$ defined on the finite alphabet $\cal{X}$ describes the private attribute and is arbitrarily correlated with the files in the database $Y=(Y_1,\ldots,Y_N)$ where $|\mathcal{Y}|= |\mathcal{Y}_1|\times\ldots\times|\mathcal{Y}_N|=(2^F)^N$ and $\mathcal{Y}= \mathcal{Y}_1\times\ldots\times\mathcal{Y}_N$. 
The joint distribution of the private data and the database is shown by $P_{XY_1\cdot Y_N}$ and marginal distributions of $X$ and $Y_i$ by vectors $P_X$ and $P_{Y_i}$ defined on $\mathbb{R}^{|\mathcal{X}|}$ and $\mathbb{R}^{2^F}$ given by the row and column sums of $P_{XY_1\cdot Y_N}$. The relation between $X$ and $Y$ is given by the matrix $P_{Y_1\cdot Y_N|X}$ defined on $\mathbb{R}^{(2^F)^N\times|\mathcal{X}|}$.
Each user is equipped with a local cache of size $MF$ bits.
The shared secret key is denoted by the discrete RV $W$ defined on $\{1,\ldots,T\}$, and is assumed to be known by the server and the users, but not the adversary. Furthermore, we assume that $W$ is uniformly distributed and is independent of $(X,Y)$. 
Similar to \cite{maddah1}, we have $K$ caching functions to be used during the placement phase:
\begin{align}
\theta_k: [|\mathcal{Y}|] \rightarrow [2^{\floor{FM}}],\ \forall k\in[K], 
\end{align} 
such that
\begin{align}
Z_k=\theta_k(Y_1,\ldots,Y_N),\ \forall k\in[K].
\end{align} 
Let the vector $(Y_{d_1},\ldots,Y_{d_K})$ denote the demands sent by the users at the beginning of the delivery phase, where $(d_1,\ldots,d_K)\in[N]^K$.  
A variable-length prefix-free code with a shared secret key of size $T$ consists of mappings:
\begin{align*}
&(\text{encoder}) \ \mathcal{C}: \ [|\mathcal{Y}|]\times [T]\times[N]^K \rightarrow \{0,1\}^*\\
&(\text{decoder}) \mathcal{D}_k\!: \! \{0,1\}^*\!\!\times\! [T]\!\times\! [2^{\floor{MF}}]\!\times\! [N]^K\!\!\!\rightarrow\! 2^F\!\!\!,\ \! \forall k\!\in\![K].
\end{align*}
The output of the encoder $\mathcal{C}(Y,W,d_1,\ldots,d_K)$ is the codeword the server sends over the shared link in order to satisfy the demands of the users $(Y_{d_1},\ldots,Y_{d_K})$. At the user side, user $k$ employs the decoding function $\mathcal{D}_k$ to recover its demand $Y_{d_k}$, i.e., $\hat{Y}_{d_k}=\mathcal{D}_k(Z_k,W,\mathcal{C}(Y,W,d_1,\ldots,d_K),d_1,\ldots,d_K)$.
The variable-length code $(\mathcal{C},\mathcal{D}_1,..,\mathcal{D}_K)$ is lossless if for all $k\in[K]$ we have
\begin{align}\label{choon}
\mathbb{P}(\mathcal{D}_k(\mathcal{C}(Y,W,d_1,\ldots,d_K),W,Z_k,d_1,\ldots,d_K)\!=\!Y_{d_k})\!=\!1.
\end{align} 
In the following, we define perfectly private codes.
The code $(\mathcal{C},\mathcal{D}_1,\ldots,\mathcal{D}_K)$ is \textit{perfectly private} if
\begin{align}
I(\mathcal{C}(Y,W,d_1,\ldots,d_K);X)=0.\label{lashwi}
\end{align}
Let $\xi$ be the support of $\mathcal{C}(Y,W,d_1,\ldots,d_K)$, where $\xi\subseteq \{0,1\}^*$. For any $c\in\xi$, let $\mathbb{L}(c)$ be the length of the codeword. The lossless code $(\mathcal{C},\mathcal{D}_1,\ldots,\mathcal{D}_K)$ is \textit{$(\alpha,T,d_1,\ldots,d_K)$-variable-length} if 
\begin{align}\label{jojowi}
\mathbb{E}(\mathbb{L}(\mathcal{C}(Y,w,d_1,\ldots,d_K)))\!\leq\! \alpha,\ \forall w\!\in\!\![T]\ \text{and}\ \forall d_1,\ldots,d_K,
\end{align} 
and $(\mathcal{C},\mathcal{D}_1,\ldots,\mathcal{D}_K)$ satisfies \eqref{choon}.
Finally, let us define the set $\mathcal{H}(\alpha,T,d_1,\ldots,d_K)$ as follows:\\
$\mathcal{H}(\alpha,T,d_1,\ldots,d_K)\triangleq\{(\mathcal{C},\mathcal{D}_1,\ldots,\mathcal{D}_K): (\mathcal{C},\mathcal{D}_1,\ldots,\mathcal{D}_K)\ \text{is}\ \text{perfectly-private and}\\ (\alpha,T,d_1,\ldots,d_K)\text{-variable-length}  \}$. 
The cache-aided private compression design problems can be stated as follows
\begin{align}
\mathbb{L}(P_{XY_1\cdot Y_N},T)&=\!\!\!\!\!\inf_{\begin{array}{c} 
	\substack{d_1,\ldots,d_K,(\mathcal{C},\mathcal{D}_1,\ldots,\mathcal{D}_K)\in\mathcal{H}(\alpha,T,d_1,\ldots,d_K)}
	\end{array}}\alpha.\label{main1wi}
\end{align} 
\begin{remark}
	\normalfont 
	By letting $M=0$, $N=1$, and $K=1$, \eqref{main1wi} leads to the privacy-compression rate trade-off studied in \cite{kostala} and \cite{king3}, where upper and lower bounds have been derived.
\end{remark}
\begin{remark}
		\normalfont 
	Similar to \cite{amircache}, to design a code, we consider the worst case demand combinations. This follows since \eqref{jojowi} must hold for all possible combinations of the demands.
\end{remark}

 \section{Main Results}\label{sec:resul}
 In this section, we derive upper bounds on $\mathbb{L}(P_{XY_1\cdot Y_N},T)$ defined in \eqref{main1wi}. Providing new converse bounds is challenging and due to the limited space is left as a future work. Similar to \cite{amircache}, we employ the two-part code construction, which has been used in \cite{kostala}. We first encode the private data $X$ using a one-time pad \cite[Lemma~1]{kostala2}, then encode the RV found by the achievable scheme in \cite[Theorem~1]{maddah1} by using the greedy entropy based algorithm in \cite{kocaoglu2017entropic}. To do this, let us first recall the important results regarding upper and lower bounds on the minimum entropy coupling as obtained in \cite{kocaoglu2017entropic,compton2023minimum,shkel2023information}. Similar to \cite{shkel2023information}, for a given joint distribution $P_{XY}$ let the minimum entropy of functional representation of $(X,Y)$ be defined as
 \begin{align}
 H^*(P_{XY})=\!\!\!\!\!\inf_{\begin{array}{c} 
 	\substack{H(Y|X,U)=0,\ I(X;U)=0}
 	\end{array}}H(U).\label{minent}
 \end{align}
 \begin{remark}
 	\normalfont
 	By letting $\alpha=1$ in \cite[Definition 1]{shkel2023information}, it leads to the same problem in \eqref{minent}.
 \end{remark}
 \begin{remark}
 	\normalfont
 	As shown in \cite[Lemma 1]{shkel2023information}, the minimum entropy functional representation and the minimum entropy coupling are related functions. More specifically, $H^*(P_{XY})$ equals to the minimum entropy coupling of the set of PMFs $\{P_{Y|X=x_1},\ldots,P_{Y|X=x_n}\}$, where $\mathcal{X}=\{x_1,\ldots,x_n\}$.
 \end{remark}
Let $\mathcal{G}_S$ be the output of the greedy entropy-based algorithm which is proposed in \cite[Section 3]{kocaoglu2017entropic}, i.e., $H^*(P_{XY})\leq H(\mathcal{G}_S)$. More specifically, the corresponding algorithm aims to solve \eqref{minent} but does not achieve the optimal solution in general. Next, we recall a result obtained in \cite{compton2023minimum} which shows that $\mathcal{G}_S$ is optimal within $\frac{\log e}{e}\approx 0.53$ bits for $|\mathcal{X}|=2$ and is optimal within $\frac{1+\log e}{2}\approx 1.22$ bits for $|\mathcal{X}|>2$.  
 Let $U^*$ achieve the optimal solution of \eqref{minent}, i.e., $H(U^*)=H^*(P_{XY})$.
 \begin{theorem}\cite[Th. 3.4, Th. 4.1, Th. 4.2]{compton2023minimum}
 	Let $(X,Y)\sim P_{XY}$ and have finite alphabets. When $X$ is binary, we have
 	\begin{align}
 	H(Profile)&\leq H(U^*)\leq H(\mathcal{G}_S) \leq H(Profile)+\frac{\log e}{e}\nonumber \\ & \approx H(Profile)+0.53.
 	\end{align}
 	Moreover, for $|\mathcal{X}|>2$ we have
 	\begin{align}
 	H(Profile)&\leq H(U^*)\leq H(\mathcal{G}_S)\nonumber \\&\leq H(Profile)+\frac{1+\log e}{2}\nonumber \\ & \approx H(Profile)+1.22.
 	\end{align}
 	Here, $Profile$ corresponds to the profile method proposed in \cite[Section 3]{compton2023minimum}. 
 \end{theorem}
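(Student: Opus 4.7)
The plan is to prove the sandwich bounds $H(Profile)\leq H(U^*)\leq H(\mathcal{G}_S)\leq H(Profile)+c$ in two stages: first a majorization-based lower bound that works uniformly in the alphabet size, and then a case-specific analysis of the greedy algorithm that gives the upper constants $c=\frac{\log e}{e}$ for $|\mathcal{X}|=2$ and $c=\frac{1+\log e}{2}$ for $|\mathcal{X}|>2$. The middle inequality $H(U^*)\leq H(\mathcal{G}_S)$ is immediate since $\mathcal{G}_S$ is a feasible functional representation, so by the definition \eqref{minent} its entropy upper bounds the infimum.

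For the lower bound $H(Profile)\leq H(U^*)$, I would first recall that, via the coupling interpretation in \cite{shkel2023information}, any feasible $U$ in \eqref{minent} corresponds to a joint distribution whose marginal $P_U$ is a valid coupling of the conditional PMFs $\{P_{Y|X=x}\}_{x\in\mathcal{X}}$. The Profile construction of \cite{compton2023minimum} defines, from the sorted conditional masses, a reference sequence that is majorized by the sorted version of any such coupling. Since Shannon entropy is Schur-concave, majorization reverses under $H$, giving $H(\text{coupling})\geq H(Profile)$ and hence $H(U^*)\geq H(Profile)$. This step is purely structural and does not use the algorithm at all.

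For the upper bound $H(\mathcal{G}_S)\leq H(Profile)+c$, I would analyze the greedy procedure of \cite{kocaoglu2017entropic}, which at each step peels off the largest joint mass consistent with the remaining marginals. The idea is to couple the sorted greedy sequence $\{g_i\}$ with the profile sequence $\{p_i\}$ term by term: greedy's $g_i$ is always $\leq p_i$ (by the local majorization property of the peeling step), while $\sum g_i=\sum p_i=1$. I would then control the entropy gap $\sum(-g_i\log g_i)-\sum(-p_i\log p_i)$ by a two-probability rearrangement argument: any discrepancy between $g_i$ and $p_i$ can be absorbed into a worst-case two-term redistribution. The worst such redistribution for $|\mathcal{X}|=2$ reduces to maximizing $-q\log q$ over $q\in[0,1]$, whose maximum is $\frac{\log e}{e}$ attained at $q=1/e$, while for $|\mathcal{X}|>2$ the extremal configuration is an additional $\frac12$-split that contributes a further $1/2$ bit, giving $\frac{1+\log e}{2}$.

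The main obstacle will be the tight constant in the $|\mathcal{X}|>2$ case: the rearrangement argument is clean for binary sources, but for larger alphabets several conditional distributions can interact, so showing that the worst case is still a small (essentially two-term) perturbation requires a delicate induction on the greedy step, together with identifying the extremal instance that attains $\frac{1+\log e}{2}$. In practice I would lean on the inductive framework of \cite[Sec.~4]{compton2023minimum}, reducing the general case to the binary case plus a single additional step whose contribution is bounded by $\frac{1}{2}$ bit.
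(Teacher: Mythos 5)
This ``Theorem'' is not proved in the paper at all: it is quoted verbatim from \cite{compton2023minimum} (together with the reformulation in \cite{shkel2023information}), and the paper's role is merely to import the statement so it can be applied in Theorem~\ref{th1}. There is therefore no ``paper proof'' to compare against; your task here should really be to check that the citation is faithfully reproduced and that the quantities $H(U^*)$, $H(\mathcal{G}_S)$, $H(Profile)$ are defined consistently with the references, not to re-derive the result.

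As a sketch of the cited result, the structure you propose is broadly right: the middle inequality $H(U^*)\leq H(\mathcal{G}_S)$ is indeed trivial from feasibility of the greedy output in \eqref{minent}, and the lower bound $H(Profile)\leq H(U^*)$ does come from the fact that the profile distribution majorizes every valid coupling of $\{P_{Y|X=x}\}_x$ together with Schur-concavity of Shannon entropy (phrase it as: majorization implies the majorizing vector has smaller entropy). However, your explanation of the constant in the $|\mathcal{X}|>2$ case does not work as stated. You claim the general case is ``the binary case plus a single additional step contributing $\tfrac12$ bit,'' which would give $\tfrac{\log e}{e}+\tfrac12\approx 1.03$, not $\tfrac{1+\log e}{2}\approx 1.22$. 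The two constants are not additively related in this way: $\tfrac{1+\log e}{2}=\tfrac12(1+\log e)$ arises from a different extremal configuration in the analysis of the greedy step, and the binary constant $\tfrac{\log e}{e}=\max_{q\in[0,1]}(-q\log q)$ is a genuinely separate optimization. Also, the claim that the sorted greedy masses satisfy $g_i\le p_i$ term by term for all $i$ is stronger than what majorization gives you (majorization only controls partial sums), and that ordering need not hold pointwise; the actual argument in \cite{compton2023minimum} bounds the entropy gap without asserting such a pointwise domination. If you want to actually reconstruct the proof, you should follow the inductive ``peeling'' argument of \cite{compton2023minimum} directly rather than heuristically stacking a binary bound with an extra half bit.
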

 Next, we present results on lower bounds on $H^*(P_{XY})$ obtained in a parallel work \cite{shkel2023information}. The lower bounds are obtained by using information spectrum and majorization concepts.
 \begin{theorem}\cite[Corollary 2, Th. 2]{shkel2023information}
 	Let $(X,Y)\sim P_{XY}$ and have finite alphabets. By letting $\alpha=1$ in \cite[Corollary 2, Th. 2]{shkel2023information}, we have
 	\begin{align}
 	H(\wedge_{x\in\mathcal{X}}P_{Y|x} )\leq H(Q^*)\leq H(U^*).
 	\end{align}
 	where $\wedge$ corresponds to the greatest lower bound with respect to majorization and $Q^*$ is defined in \cite[Lemma 3]{shkel2023information}.
 \end{theorem}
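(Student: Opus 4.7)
The plan is to prove the two inequalities separately, working upward from the weakest lower bound. For the majorization bound $H(\wedge_{x\in\mathcal{X}}P_{Y|x}) \leq H(U^*)$, I would begin by unpacking the feasibility constraints in \eqref{minent}. The condition $H(Y|X,U)=0$ forces $Y$ to be a deterministic function $f(X,U)$, while $I(X;U)=0$ forces $U$ to be independent of $X$. Consequently, for each $x\in\mathcal{X}$, the conditional $P_{Y|X=x}$ is obtained by pushing $P_U$ forward through $u\mapsto f(x,u)$. Because this pushforward can only merge probability masses, $P_{Y|X=x}$ majorizes $P_U$. Hence $P_U$ is a common majorization lower bound of the family $\{P_{Y|x}\}_{x\in\mathcal{X}}$, so it must in turn be majorized by the greatest such lower bound $\wedge_{x\in\mathcal{X}}P_{Y|x}$. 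Schur-concavity of Shannon entropy then gives $H(U)\geq H(\wedge_{x\in\mathcal{X}}P_{Y|x})$, and optimizing over feasible $U$ yields the stated inequality.

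For the sharper bound $H(Q^*) \leq H(U^*)$, I would exploit the information-spectrum construction that defines $Q^*$ in Lemma~3 of \cite{shkel2023information}. The idea is to sort the probability masses of each slice $P_{Y|x}$ in decreasing order and to combine them, at each rank, into a single auxiliary PMF whose sorted tail sums encode the worst-case rank-wise concentration across $\mathcal{X}$. I would then show that for any feasible $U$, the sorted tail sums of $P_U$ are pointwise dominated by those of $Q^*$, giving $Q^*\succ P_U$ in the majorization order and thus $H(Q^*)\leq H(P_U)$ by Schur-concavity; minimizing over feasible $U$ yields $H(Q^*)\leq H(U^*)$. The remaining inequality $H(\wedge_{x\in\mathcal{X}}P_{Y|x}) \leq H(Q^*)$ would follow from the fact that $Q^*$ is itself a majorization lower bound of $\{P_{Y|x}\}_{x\in\mathcal{X}}$, combined with the defining property of the greatest lower bound $\wedge$ and Schur-concavity of entropy.

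The hard part will be verifying the two-sided majorization relation for $Q^*$, namely showing simultaneously that $Q^*$ lies below every slice $P_{Y|x}$ and above the marginal $P_U$ of any feasible coupling in the majorization lattice. This requires careful bookkeeping of sorted partial sums via the Hardy--Littlewood--P\'olya characterization of majorization, and leans on the fact that any admissible coupling distributes mass consistently across the conditional slices, which provides the structural constraint making the information-spectrum construction admissible. By contrast, the Schur-concavity argument for the outer bound is essentially immediate once the pushforward interpretation has been identified.
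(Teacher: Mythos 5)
This theorem is cited verbatim from \cite{shkel2023information} and is not proved in the paper; the text surrounding it only recalls the greedy construction of $Q^*$, so there is no in-paper proof against which to compare your proposal. Evaluated on its own terms, your high-level plan is the right one, and the first step is correct and complete: the constraints $H(Y|X,U)=0$ and $I(X;U)=0$ make each slice $P_{Y|X=x}$ a deterministic coarsening (pushforward under $u\mapsto f(x,u)$) of $P_U$, coarsening increases in the majorization order, so $P_U$ is a common majorization lower bound of $\{P_{Y|x}\}_x$, hence $P_U\preceq \wedge_x P_{Y|x}$ and Schur-concavity gives $H(\wedge_x P_{Y|x})\leq H(U)$ for every feasible $U$. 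Likewise $H(\wedge_x P_{Y|x})\leq H(Q^*)$ is immediate once one checks (as you observe) that the greedy $Q^*$ is itself majorized by every $P_{Y|x}$.

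Two issues remain. There is a direction slip in the statement ``the sorted tail sums of $P_U$ are pointwise dominated by those of $Q^*$, giving $Q^*\succ P_U$.'' With the standard Hardy--Littlewood--P\'olya convention, $Q^*\succ P_U$ is equivalent to the sorted \emph{prefix} sums of $Q^*$ dominating those of $P_U$, i.e.\ the tail sums of $Q^*$ being dominated by the tail sums of $P_U$ --- the opposite of what you wrote. The conclusion $Q^*\succ P_U$ is what you need (it yields $H(Q^*)\leq H(P_U)$), so the criterion you state for it must be flipped. More substantively, the crucial inequality $H(Q^*)\leq H(U^*)$ is announced rather than derived. The first step is easy and worth writing out: since each $P_{Y|x}$ coarsens $P_U$, $\max_y P_{Y|X}(y|x)\geq \max_u P_U(u)$ for every $x$, hence $q_1^*=\min_x\max_y P_{Y|X}(y|x)\geq P_{U,[1]}$. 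But propagating this to $\sum_{i\leq k}q_i^*\geq\sum_{i\leq k}P_{U,[i]}$ for all $k$ requires a nontrivial invariant about the residual matrices produced by the subtract-and-reorder step, and that is where the actual content of \cite[Lemma~3, Th.~2]{shkel2023information} lies. Flagging it as ``careful bookkeeping'' is honest, but as it stands this is the genuine gap: the pushforward observation alone does not yield the rank-wise domination, and nothing in your sketch substitutes for the missing induction.
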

\begin{remark}
	\normalfont
	In contrast with \cite{compton2023minimum}, the lower bounds in \cite{shkel2023information} are obtained considering \textit{R{\'e}nyi} entropy in \eqref{minent}. In this paper, we consider \textit{Shannon} entropy which is a special case of \textit{R{\'e}nyi} entropy.
\end{remark}
\begin{remark}
	\normalfont
	As argued in \cite[Remark 1]{shkel2023information}, for $\alpha=1$ the (largest) lower bounds obtained in \cite{compton2023minimum} and \cite{shkel2023information} match. Thus, using Theorem 1, for binary $X$ we have
	\begin{align}
	H(Q^*)\leq H(U^*)\leq H(\mathcal{G}_S) \leq H(Q^*)+\frac{\log e}{e},
	\end{align}
	and for $|\mathcal{X}|>2$,
	\begin{align}
	H(Q^*)&\leq H(U^*)\leq H(\mathcal{G}_S)\leq H(Q^*)+\frac{1+\log e}{2}.
	\end{align}
	Moreover, in some cases the lower bound $H(Q^*)$ is tight, e.g., see \cite[Example 2]{shkel2023information}.
\end{remark}
As discussed in \cite{shkel2023information}, $Q^*$ can be obtained by a greedy construction. To do so, let $Q^*=(q_1^*,q_2^*,\ldots)$ with $q_1^*\geq q_2^*\geq...$, where $q_i=P(Q=q_i)$. Let $P_{Y|X}$ be a matrix with columns $P_{Y|X=x}$ where each is a conditional distribution vector and assume that each column has a descending order (re-order each column). Let $q_1^*= \min_{x\in\mathcal{X}}\{\max_{y\in{\mathcal{Y}}} P_{Y|X}(y|x)\}$. In other words, we choose the smallest number in the first row of the matrix $P_{Y|X}$. Next, we subtract $q_1^*$ from the first row and reorder each column and update the matrix. We then choose the smallest number from the first row of the updated matrix and represent it by $q_2^*$. We continue this procedure until the summation of $q_i^*$ reaches one. To see an example refer to \cite[Example 1]{shkel2023information}.   

 Next, we present a summary of the achievable scheme proposed in \cite[Theorem~1]{maddah1}. We first consider a cache size $M\in\{\frac{N}{K},\frac{2N}{K},\ldots,N\}$ and define $p\triangleq\frac{MK}{N}$, which is an integer. In the placement phase, each file, e.g., $Y_n$, $n\in[N]$, is split into $\binom{K}{p}$ equal size subfiles and labeled as follows
 \begin{align}
 Y_n=(Y_{n,\Omega}:\Omega\subset[K],|\Omega|=p).
 \end{align}
For all $n$, the server places $Y_{n,\Omega}$ in the cache of user $k$ if $k\in \Omega$. As argued in \cite{maddah1}, each user caches total of $N\binom{K-1}{p-1}\frac{F}{\binom{K}{p}}=MF$ bits, which satisfies the memory constraint with equality. In the delivery phase, 
the server sends the following message over the shared link
\begin{align}\label{cache1}
\mathcal{C}'\triangleq(C_{\gamma_1},\ldots,C_{\gamma_L}),
\end{align}
where $L=\binom{K}{p+1}$ and for any $i\in\{1,\ldots,L\}$, $\gamma_i$ is the $i$-th subset of $[K]$ with cardinality $|\gamma_i|=p+1$, furthermore,
\begin{align}\label{cache2}
C_{\gamma_i}\triangleq\oplus_{j\in \gamma_i} Y_{d_j,\gamma_i \backslash \{j\} },
\end{align}
 where $\oplus$ denotes bitwise XOR function. Note that $Y_{d_j,\gamma_i \backslash \{j\} }$ is the subfile that is not cached by user $j$, but is requested by it. In other words, considering each subset of $[K]$ with cardinality $|\gamma_i|=p+1$, using the message $C_{\gamma_i}$, each user, e.g., user $j$, is able to decode the subfile $Y_{d_j,\gamma_i \backslash \{j\} }$ that is not cached by it. Considering all the messages in \eqref{cache1} user $i$ can decode file $Y_{d_i}$ completely using $\mathcal{C}'$ and its local cache content $Z_i$. Note that each subfile $C_{\gamma_i}$ has size $\frac{F}{\binom{K}{p}}$ bits. As pointed out in \cite{maddah1}, for other values of $M$ we can use the memory-sharing technique. For more details see \cite[Proof of Theorem~1]{maddah1}.
 \subsection{New achievable scheme:}
 In this part, we present our achievable scheme which leads to upper bounds on \eqref{main1wi}. For simplicity let 
 $|\mathcal{C}'|$ be the cardinality of the codeword defined in \eqref{cache1} where $|\mathcal{C}'|= |\mathcal{C}_{\gamma_1}|\times\ldots\times|\mathcal{C}_{\gamma_L}|$. In the next result let $Q^*$ achieve the lower bound in Theorem 2 for the following problem 
  \begin{align}
 H^*(P_{X\mathcal{C}'})=\!\!\!\!\!\inf_{\begin{array}{c} 
 	\substack{H(\mathcal{C}'|X,U)=0,\ I(X;U)=0}
 	\end{array}}H(U),\label{minent2}
 \end{align}
 where in \eqref{minent}, $Y$ is substituted by $\mathcal{C}'$, i.e., $\mathcal{C}'\leftarrow Y$, and $\mathcal{C}'$ is as defined in \eqref{cache1}. Using Theorem 2 and Remark 6, $H^*(P_{X\mathcal{C}'})$ can be lower bounded by $H(Q^*)$ and upper bounded by $H(Q^*)+0.53$ when $|X|=2$ and by $H(Q^*)+1.22$ when $|X|>2$. For binary $X$ we have
 \begin{align}\label{kiun}
 H(Q^*)\leq H^*(P_{X\mathcal{C}'}) \leq H(Q^*)+\frac{\log e}{e},
 \end{align}
 and for $|X|>2$,
 \begin{align}\label{kiun1}
 H(Q^*)\leq H^*(P_{X\mathcal{C}'}) \leq H(Q^*)+\frac{1+\log e}{2}.
 \end{align}
 We emphasize that $Q^*$ that is used in \eqref{kiun} and \eqref{kiun1} is constructed using the greedy approach based on the matrix $P_{\mathcal{C}'|X}$. We use the same $Q^*$ for the following result.
 \begin{theorem}\label{th1}
 	Let RVs $(X,Y)=(X,Y_1,\ldots,Y_N)$ be distributed according to $P_{XY_1\cdot Y_N}$ supported on alphabets $\mathcal{X}$ and $\mathcal{Y}$, where $|\mathcal{X}|$ and $|\mathcal{Y}|$ are finite, and let the shared secret key size be $|\mathcal{X}|$, i.e., $T=|\mathcal{X}|$. Furthermore, let $M\in\{\frac{N}{K},\frac{2N}{K},\ldots,N\}$. Let $|X|=2$, we have
 	\begin{align}
 	\mathbb{L}(P_{XY},2)\leq \!H(Q^*)\!+\frac{\log e}{e}+\!2,\label{koonwi}
 	\end{align}
 	where $\mathcal{C}'$ is as defined in \eqref{cache1}. When $|X|>2$, we have
 	 \begin{align}
 	 \mathbb{L}(P_{XY},|\mathcal{X}|)\leq \!H(Q^*)\!+\frac{1+\log e}{2}+\!1+\!\ceil{\log (|\mathcal{X}|)},\label{koonwi2}
 	 \end{align}
 \end{theorem}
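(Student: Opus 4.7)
The plan is to mimic the two-part construction of \cite{amircache,kostala}, but replace the Functional Representation Lemma step with the greedy minimum entropy coupling of \cite{kocaoglu2017entropic}. First, I would run the Maddah-Ali--Niesen placement/delivery of \cite{maddah1} to form the delivery word $\mathcal{C}'$ in \eqref{cache1}; this is a function of $(Y_1,\ldots,Y_N)$ that together with any single user's cache $Z_k$ suffices to reconstruct $Y_{d_k}$. Next, I would take a random variable $U$ that attains the infimum in \eqref{minent2} (or, in practice, the greedy output $\mathcal{G}_S$ of \cite{kocaoglu2017entropic}) so that $\mathcal{C}' = f(X,U)$ for some deterministic $f$ and $I(X;U)=0$, with $H(U)$ controlled by Theorem 1 and Remark 6.

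The codeword $\mathcal{C}$ is then the concatenation of two blocks: a fixed-length block $V \triangleq X \oplus W$ of length $\lceil \log |\mathcal{X}| \rceil$, obtained by one-time-pad \cite[Lemma 1]{kostala2}, followed by a prefix-free (e.g., Shannon--Fano or Huffman) encoding of $U$. Placing the fixed-length $V$ first makes the concatenation uniquely parsable. For decoding, user $k$ recovers $X = V \oplus W$ from the shared key $W$, then decodes $U$ from the remaining bits, computes $\mathcal{C}' = f(X,U)$, and finally applies the Maddah-Ali--Niesen decoder to $(\mathcal{C}',Z_k)$ to output $Y_{d_k}$, which gives \eqref{choon}.

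For the privacy constraint \eqref{lashwi}, I would show that $(V,U)$ is jointly independent of $X$. Since $W$ is uniform on $[|\mathcal{X}|]$ and independent of $(X,Y,U)$, a direct computation gives $P(V=v,U=u\mid X=x) = P(U=u)\cdot\tfrac{1}{|\mathcal{X}|}$, which does not depend on $x$; combined with $I(X;U)=0$ this yields $I(X;\mathcal{C})=0$. For the length bound \eqref{jojowi}, the $V$-block contributes exactly $\lceil \log |\mathcal{X}| \rceil$ bits, and the $U$-block contributes at most $H(U)+1$ on average by the standard Shannon/Huffman bound. Using \eqref{kiun} for $|\mathcal{X}|=2$ (where $\lceil \log 2\rceil = 1$) gives the bound $H(Q^*)+\tfrac{\log e}{e}+2$ in \eqref{koonwi}, while using \eqref{kiun1} for $|\mathcal{X}|>2$ gives $H(Q^*)+\tfrac{1+\log e}{2}+1+\lceil \log|\mathcal{X}|\rceil$ in \eqref{koonwi2}.

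The main obstacle I anticipate is the privacy verification: one must be careful because $U$ is, in general, not independent of $V$, and the natural statement $I(X;\mathcal{C})=0$ is a \emph{joint} independence claim rather than two separate marginal independences. The cleanest route is the direct conditional-probability computation above, leveraging that $W$ is independent of $(X,Y,U)$, rather than chain-ruling $I(X;V,U)$, which would otherwise require a separate argument that $H(U\mid V) = H(U)$. Everything else, including the memory-sharing extension for $M \notin \{N/K,2N/K,\ldots,N\}$, is a routine appeal to \cite[Theorem 1]{maddah1}.
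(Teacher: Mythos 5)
Your proposal is correct and follows essentially the same route as the paper: the identical two-part construction (one-time pad on $X$ costing $\lceil\log|\mathcal{X}|\rceil$ bits, followed by a prefix-free encoding of the greedy minimum-entropy-coupling output $U$ costing at most $H(U)+1$ bits), with the bounds \eqref{kiun} and \eqref{kiun1} supplying the final constants. Your explicit conditional-probability verification of $I(X;\mathcal{C})=0$ is in fact slightly more careful than the paper's one-line remark that the one-time-pad randomness is independent of $X$ and $U$.
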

 \begin{proof}
 	The proof is similar to \cite{amircache} and the main difference is to use minimum entropy output of \eqref{minent2} instead of FRL that is used in two-part construction coding in \cite{amircache}.
 	In the placement phase, we use the same scheme as discussed before. In the delivery phase, we use the following strategy.
 	Similar to \cite{amircache}, we use two-part code construction to achieve the upper bounds. As shown in Fig. \ref{achieve}, we first encode the private data $X$ using one-time pad coding \cite[Lemma~1]{kostala2}, which uses $\ceil{\log(|\mathcal{X}|)}$ bits. 
 	Next, we produce $U$ based on greedy entropy-based algorithm proposed in \cite{kocaoglu2017entropic} which solves the minimum entropy problem in \eqref{minent2}, where $\mathcal{C}'$, defined in \eqref{cache1}, is the response that the server sends over the shared link to satisfy the users$'$ demands \cite{maddah1}. 
 	Thus, we have
 	\begin{align}
 	H(\mathcal{C}'|X,U)&=0,\label{kharkosde}\\
 	I(U;X)&=0,
 	\end{align}  
 	Note that in Remark 6 we substitute $\mathcal{G}_S$ by $U$ and for binary $X$ we have
 	\begin{align}\label{kiun2}
 	H(U) \leq H(Q^*)+\frac{\log e}{e},
 	\end{align}
 	and for $|X|>2$,
 	\begin{align}\label{kiun3}
 	H(U) \leq H(Q^*)+\frac{1+\log e}{2}.
 	\end{align}
 	 
 	Thus, we obtain \eqref{kiun} and \eqref{kiun1}.
Moreover, for the leakage constraint we note that the randomness of one-time-pad coding is independent of $X$ and the output of the greedy entropy-based algorithm $U$. 
 \begin{figure}[h]
 	\centering
 	\includegraphics[scale = .4,trim={0 9cm 0 9cm}]{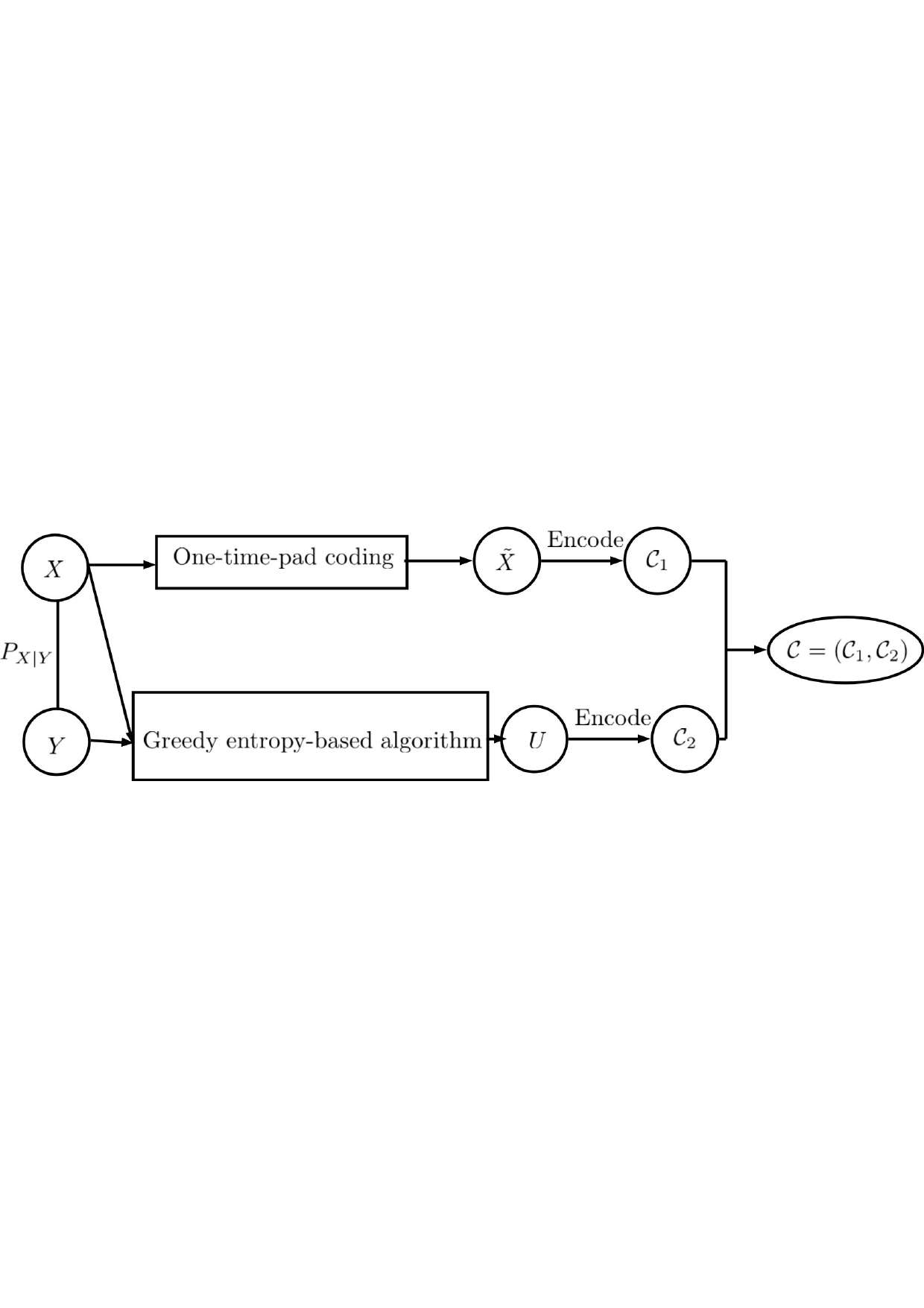}
 	\caption{Encoder design: illustration of the achievability scheme of Theorem \ref{th1}. Two-part code construction is used to produce the response of the server, $\mathcal{C}$. The server sends $\cal C$ over the channel, which is independent of $X$.}
 	\label{achieve}
 \end{figure}
As shown in Fig. \ref{decode}, at user side, each user, e.g., user $i$, first decodes $X$ using one-time-pad decoder. Then, based on \eqref{kharkosde} it decodes $\mathcal{C}'$ using $U$ and $X$. Finally, it decodes $Y_{d_i}$ using local cache $Z_i$ and the response $\mathcal{C}'$.  
\begin{figure}[h]
	\centering
	\includegraphics[scale = .4,trim={0 9cm 0 9cm}]{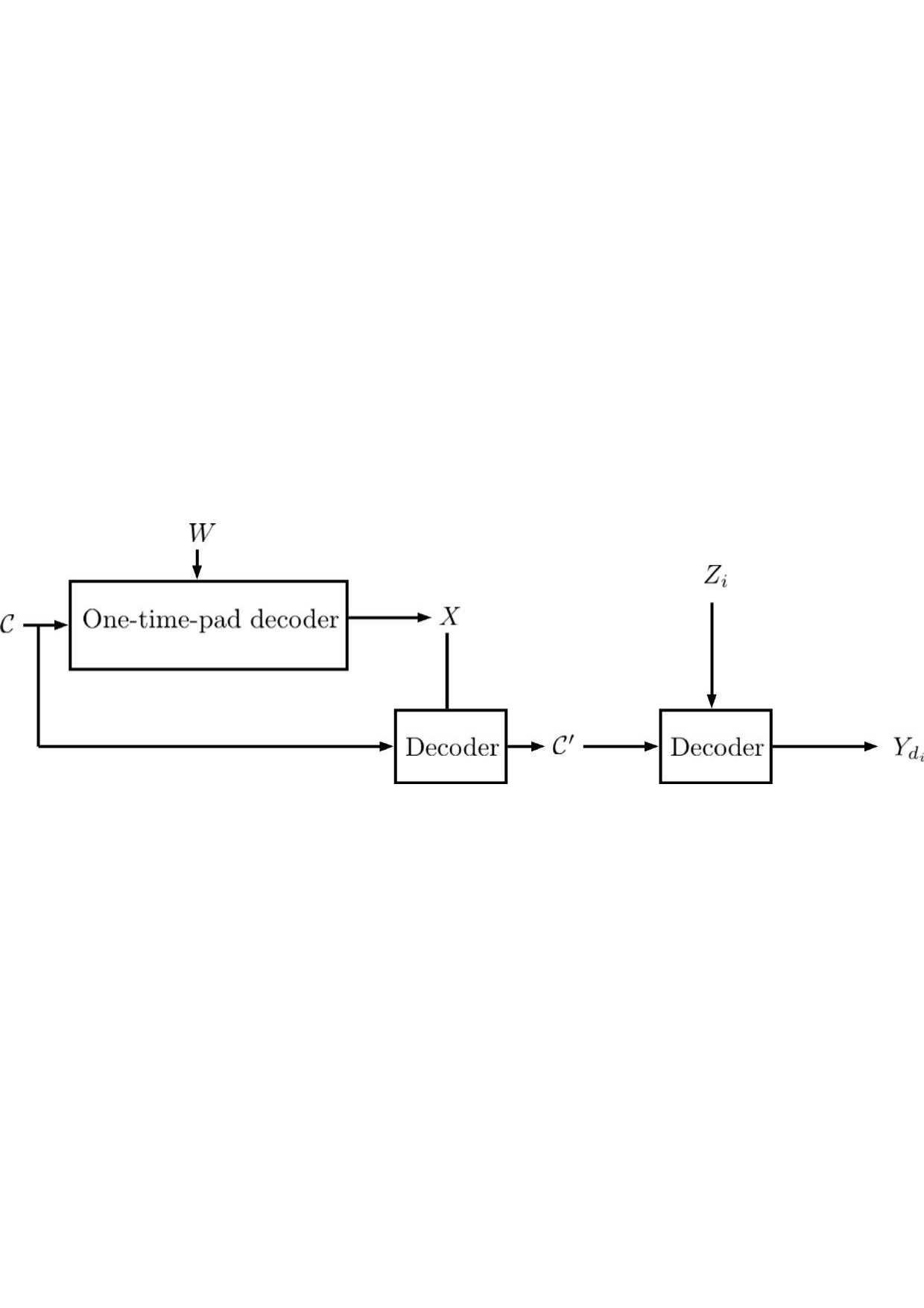}
	\caption{Illustration of the decoding process for the achievability scheme of Theorem \ref{th1}.}
	\label{decode}
\end{figure}
  \end{proof}
 \begin{remark}
 	\normalfont
 	As we mentioned earlier, the main difference between the present scheme and \cite{amircache} is to use greedy entropy-based algorithm which aims to minimize the output of FRL and is optimal within a constant gap.  
 \end{remark}
\begin{remark}
	The complexity of the algorithm to find $Q^*$ in Theorem \ref{th1} is linear in $|\mathcal{C}'|\times|\mathcal{X}|$. This can be shown by using \cite[Lemma 3]{shkel2023information}.
\end{remark}
\begin{remark}
	Although in Theorem~\ref{th1} we assume that $M\in\{\frac{N}{K},\frac{2N}{K},\ldots,N\}$, the results can be extended for other values of $M$ using the memory sharing technique of \cite[Theorem~1]{maddah1}.  
\end{remark}
\begin{remark}
	Similar to \cite{amircache}, we assume that the privacy leakage constraint needs to be fulfilled in the delivery phase. This assumption can be motivated since the placement phase occurs during the off-peak hours and we can assume that the adversary does not listen to the channel during that time. Considering the scenarios in the presence of the adversary during the placement phase, the server can employ the same strategy as used during the delivery phase. The server can fill the caches using the two-part code construction coding. 
\end{remark}
 Next we study a numerical example to better illustrate the achievable scheme in Theorem~\ref{th1}.
 \begin{example}(\cite[Example 1]{amircache})
 	Let $F=N=K=2$ and $M=1$. Thus, $Y_1=(Y_1^1,Y_1^2)$ and $Y_2=(Y_2^1,Y_2^2)$, where $Y_i^j\in\{0,1\}$ for $i,j\in\{1,2\}$. We assume that $X$ is the pair of first bits of the database, i.e., $X=(Y_1^1,Y_2^1)$, $Y_1$ and $Y_2$ are independent and have the following distributions
 	\begin{align*}
 	P(Y_{1}^1 =Y_{1}^2=0)&=P(Y_{1}^1 =Y_{1}^2=1)=\frac{1}{16},\\
 	P(Y_{1}^1 =1,Y_{1}^2=0)&=P(Y_{1}^1 =0,Y_{1}^2=1)=\frac{7}{16},\\
 	P(Y_{2}^1 =Y_{2}^2=0)&=P(Y_{2}^1 =Y_{2}^2=1)=\frac{1}{10},\\
 	P(Y_{2}^1 =1,Y_{2}^2=0)&=P(Y_{2}^1 =0,Y_{2}^2=1)=\frac{2}{5},
 	\end{align*}
 	In this case, the marginal distributions can be calculated as $P(Y_1^1=1)=P(Y_1^2=1)=P(Y_2^1=1)=P(Y_2^2=1)=\frac{1}{2}$. In the placement phase, the server fills the first local cache by the first bits of the database, i.e., $Z_1=\{Y_1^1,Y_2^1\}$ and the second local cache by the second bits, i.e., $Z_2=\{Y_1^2,Y_2^2\}$. In the delivery phase, assume that users $1$ and $2$ request $Y_1$ and $Y_2$, respectively, i.e., $Y_{d_1}=Y_1$ and $Y_{d_2}=Y_2$. In this case, $\mathcal{C}'=Y_1^2\oplus Y_2^1$, where $\mathcal{C}'$ is the server's response without considering the privacy constraint. Thus, $|\mathcal{C}'|=2$ and 
 	$
 	P_{\mathcal{C}'|X}=\begin{bmatrix}
 	\frac{7}{8} & \frac{1}{8} & \frac{1}{8} & \frac{7}{8}\\
 	\frac{1}{8} & \frac{7}{8} & \frac{7}{8} & \frac{1}{8}
 	\end{bmatrix}
 	$
 	Moreover, $Q^*$ has the following distribution $P_{Q^*}=[\frac{7}{8}, \frac{1}{8}]$, hence, $H(Q^*)=h(1/8)=0.5436$.
 	Next, we encode $X$ using $W$ as follows
 	\begin{align*}
 	\tilde{X}=X+W\ \text{mod}\ 4,
 	\end{align*}
 	where $W\perp X$ is a RV with uniform distribution over $\cal X$. To encode $\tilde{X}$ we use 2 bits. We then encode $\mathcal{C}'$ using greedy entropy-based algorithm. Let $U$ denote the output of the algorithm which satisfies \eqref{kiun3}. Let $\mathcal{C}_1$ and $\mathcal{C}_2$ describe the encoded $\tilde{X}$ and $U$, respectively. The server sends $\mathcal{C}=(\mathcal{C}_1,\mathcal{C}_2)$ over the shared link.
 	For this particular demand vector, using \eqref{koonwi2}, the average codelength is $4.7636$ bits. For the request vector $(Y_{d_1},Y_{d_2})=(Y_1,Y_2)$, the average length of the code is $4.7636$ bits to satisfy the zero leakage constraint. Thus, for $(Y_{d_1},Y_{d_2})=(Y_1,Y_2)$, we have
 	\begin{align*}
 	\mathbb{L}(P_{XY},4) \leq 4.7636\ \text{bits}.
 	\end{align*}
 	Using \cite{amircache}, for this particular demand vector we need $5$ bits.
 Both users first decode $X$ using $\tilde{X}$ and $W$, then decode $\mathcal{C}'=Y_1^2\oplus Y_2^1$ by using $X$ and $U$, since from FRL $\mathcal{C}'$ is a deterministic function of $U$ and $X$. User $1$ can decode $Y_1^2$ using $=Y_1^2\oplus Y_2^1$ and $Y_1^1$, which is available in the local cache $Z_1$, and user $2$ can decode $Y_2^1$ using $=Y_1^2\oplus Y_2^1$ and $Y_1^2$, which is in $Z_2$. Moreover, we choose $W$ to be independent of $X$ and $U$. As a result, $X$ and $(\tilde{X},U)$ become independent. Thus, $I(\mathcal{C};X)=0$, which means there is no leakage from $X$ to the adversary. 
 Next, assume that in the delivery phase both users request $Y_1$, i.e., $Y_{d_1}=Y_{d_2}=Y_1$. In this case, $\mathcal{C}'=Y_1^2\oplus Y_1^1$ with $|\mathcal{C}'|=2$. Using the same arguments we need $4.7636$ bits.
 Next, let $Y_{d_1}=Y_{d_2}=Y_2$. In this case, $\mathcal{C}'=Y_2^1\oplus Y_2^2$. In this case, $H(Q^*)=h(1/5)=0.7219$ and we need $4.9419$ bits. 
 Finally, let $Y_{d_1}=Y_2,\ Y_{d_2}=Y_1$. In this case, $\mathcal{C}'=Y_2^1\oplus Y_1^1$. Since $\mathcal{C}'$ is a function of $X$ it is enough to only send $X$ using on-time pad coding. Thus, for the request vector $Y_{d_1}=Y_2,\ Y_{d_2}=Y_1$, the average length of the code is $2$ bits to satisfy the zero leakage constraint.
 We conclude that in all cases we need less bits to send compared to \cite{amircache}, since by using \cite[Example 1]{amircache} we need 5 bits on average to send over the channel.   
 \end{example}
\subsection{Special case: improving the bounds using the common information concept}
In this section, we improve the bounds obtained in Theorem \ref{th1} considering a special case. To do so, let us recall the privacy mechanism design problems considered in \cite{king1} with zero leakage as follows
\begin{align}
g_{0}(P_{XY})&=\max_{\begin{array}{c} 
	\substack{P_{U|Y}:X-Y-U\\ \ I(U;X)=0,}
	\end{array}}I(Y;U),\label{maing}\\
h_{0}(P_{XY})&=\max_{\begin{array}{c} 
	\substack{P_{U|Y,X}: I(U;X)=0,}
	\end{array}}I(Y;U).\label{mainh}
\end{align} 
Finally, we define a set of joint distributions $\hat{\mathcal{P}}_{XY}$ as follows
\begin{align}
\hat{\mathcal{P}}_{XY}\triangleq \{P_{XY}:g_{0}(P_{XY})=h_{0}(P_{XY})\}.
\end{align}
As outlined in \cite[Lemma 1]{zero}, a sufficient condition to have $g_{0}(P_{XY})=h_{0}(P_{XY})$ is to have $C(X;Y)=I(X;Y)$, where $C(X,Y)$ denotes the common information between $X$ and $Y$, where common information corresponds to the Wyner \cite{wyner} or G{\'a}cs-K{\"o}rner \cite{gacs1973common} notions of common information. Moreover, a sufficient condition for $C(X;Y)=I(X;Y)$ is to let $X$ be a deterministic function of $Y$ or $Y$ be a deterministic function of $X$. In both cases, $C(X;Y)=I(X;Y)$ and $g_{0}(P_{XY})=h_{0}(P_{XY})$. For more detail see \cite[Proposition 6]{king3}. Moreover, in \cite[Lemma 2]{zero}, properties of the optimizers for $g_{0}(P_{XY})$ and $h_{0}(P_{XY})$ are obtained considering $P_{XY}\in \hat{\mathcal{P}}_{XY}$. It has been shown that the optimizer $U^*$ achieving $g_{0}(P_{XY})=h_{0}(P_{XY})$ satisfies
\begin{align}
H(Y|U^*,X)=0,\label{2}\\
I(X;U^*|Y)=0,\label{3}\\
I(X;U^*)=0.\label{4}
\end{align} 
Next, we recall the definitions of a set $\mathcal{U}^1(P_{XY})$ and a function $\mathcal{K}(P_{XY})$ in \cite{zero} as follows. 
\begin{align}
\mathcal{U}^1(P_{XY})&\triangleq \{U: U\ \text{satisfies \eqref{2}, \eqref{3}, \eqref{4}}\}\\
\mathcal{K}(P_{XY})&\triangleq \min_{U\in \mathcal{U}^1(P_{XY})} H(U).
\end{align} 
Noting that the function $\mathcal{K}(P_{XY})$ finds the minimum entropy of all optimizers satisfying $g_0(P_{XY})=h_0(P_{XY})$. A simple bound on $\mathcal{K}(P_{XY})$ has been obtained in \cite[Lemma 4]{zero}. Next, we define
 \begin{align}
A_{XY}&\triangleq \begin{bmatrix}
&P_{y_1}-P_{y_1|x_1} &\ldots & P_{y_{|\mathcal{Y}|}}-P_{y_{q}|x_1}\\
&\cdot &\ldots &\cdot\\
&P_{y_1}-P_{y_1|x_{t}} &\ldots & P_{y_{q}}-P_{y_{q}|x_{t}}
\end{bmatrix}\!\in\! \mathbb{R}^{t\times q},\\
b_{XY}&\triangleq \begin{bmatrix}
H(Y|x_1)-H(Y|X) \\
\cdot \\
H(Y|x_t)-H(Y|X)
\end{bmatrix}\in\mathbb{R}^{t},\ \bm{a}\triangleq \begin{bmatrix}
a_1 \\
\cdot \\
a_q
\end{bmatrix}\in\mathbb{R}^{q}.
\end{align}
where $t=|\mathcal{X}|$ and $q=|\mathcal{Y}|$. Noting that in \cite[Theorem 1]{zero}, bounds on $\mathcal{K}(P_{XY})$ and entropy of any $U\in\mathcal{U}^1$ have been obtained. Specifically, when $\text{rank}(A_{XY})=|\mathcal{Y}|$, the exact value of $\mathcal{K}(P_{XY})$ is obtained by solving simple linear equations in \cite[eq. (26)]{zero}. 
We emphasize that by using \cite{borz}, $g_0(P_{XY})$ can be obtained by solving a linear program in which the size of the matrix in the system of linear equations is at most $|\mathcal{Y}|\times\binom{|\mathcal{Y}|}{\text{rank}(P_{X|Y})}$ with at most $\binom{|\mathcal{Y}|}{\text{rank}(P_{X|Y})}$ variables. By solving the linear program as proposed in \cite{borz} we can find the exact value of $\mathcal{K}(P_{XY})$ and the joint distribution $P_{U|YX}$ that achieves it. The complexity of the linear program in \cite{borz} can grow faster than exponential functions with respect to $|\mathcal{Y}|$, however the complexity of the proposed method in \cite{zero} grows linearly with $|\mathcal{Y}|$. Thus, our proposed upper bound has less complexity compared to the solution in \cite{borz}.
The bounds on $\mathcal{K}(P_{XY})$ help us to obtain the next result. Next, we improve the bounds obtained in Theorem \ref{th1}.
\begin{theorem}\label{loo}
	Let RVs $(X,Y)=(X,Y_1,\ldots,Y_N)$ be distributed according to $P_{XY_1\cdot Y_N}$ supported on alphabets $\mathcal{X}$ and $\mathcal{Y}$, where $|\mathcal{X}|$ and $|\mathcal{Y}|$ are finite, and let the shared secret key size be $|\mathcal{X}|$, i.e., $T=|\mathcal{X}|$. Furthermore, let $M\in\{\frac{N}{K},\frac{2N}{K},\ldots,N\}$. Let $P_{X\mathcal{C}'}\in\hat{\mathcal{P}}_{X\mathcal{C}'}$ and let $q=|\mathcal{C}'|$ and $\beta=\log(\text{null}(P_{X|\mathcal{C}'})+1)$, where $\mathcal{C}'$ is defined in \eqref{cache1}. Then, we have 
	\begin{align}
	&\mathbb{L}(P_{X\mathcal{C}'},|\mathcal{X}|)
	\leq \mathcal{K}(P_{X\mathcal{C}'})+1+\ceil{\log(|\mathcal{X}|)} \label{log}\\
	&\leq H(\mathcal{C}'|X)\!\!+\!\!\!\!\!\!\!\!\!\!\!\!\!\!\!\!\!\!\!\!\max_{\begin{array}{c}
		\substack{a_i:A_{XY}\bm{a}=b_{XY},\bm{a}\geq 0,\\
			\sum_{i=1}^{q}\! P_{c'_i}a_i\leq \beta-H(\mathcal{C}'|X)} \end{array} }\!\!\sum_{i=1}^{q} \!\!P_{c'_i}a_i\!+\!1\!+\!\ceil{\log(|\mathcal{X}|)}\label{ass}\\&\leq \beta+1+\!\ceil{\log(|\mathcal{X}|)},\label{mass}
	\end{align}
	where $c'_i$ is the $i$-th element (alphabet) of $\mathcal{C}'$.
	Moreover, we have
	\begin{align}
	&\mathbb{L}(P_{XY},2)\leq \!H(Q^*)\!+\frac{\log e}{e}+\!2,\label{koonwi11}\\
	&\mathbb{L}(P_{XY},|\mathcal{X}|)\leq \!H(Q^*)\!+\frac{1+\log e}{2}+\!1+\!\ceil{\log (|\mathcal{X}|)}\label{koonwi22},
	\end{align}
	where $Q^*$ is defined in Theorem \ref{th1}.
	Finally, for any $P_{X\mathcal{C}'}$ (not necessarily $P_{X\mathcal{C}'}\in\hat{\mathcal{P}}_{X\mathcal{C}'}$) with $|\mathcal{C}'|\leq |\mathcal{X}|$ we have
	\begin{align}
	\mathbb{L}(P_{X\mathcal{C}'},|\mathcal{C}'|)\leq \ceil{\log{|\mathcal{C}'|}}.\label{kos3} 
	\end{align}
\end{theorem}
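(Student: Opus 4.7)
The plan is to give a direct one-time pad construction applied to $\mathcal{C}'$ itself rather than to $X$, exploiting the assumption $|\mathcal{C}'|\le|\mathcal{X}|$ to justify that a key of size $|\mathcal{C}'|$ suffices. The placement phase is identical to the scheme recalled from \cite{maddah1} above (split each file into $\binom{K}{p}$ subfiles, place $Y_{n,\Omega}$ in cache $Z_k$ whenever $k\in\Omega$), so I only need to describe the delivery phase and verify the three requirements (decodability, perfect privacy, length).

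For the delivery phase, first I would form $\mathcal{C}'$ as in \eqref{cache1}--\eqref{cache2}. Since $|\mathcal{C}'|\le|\mathcal{X}|$, the shared secret key $W$, which is uniformly distributed over $[|\mathcal{X}|]$, contains a uniformly distributed sub-key $W'$ on $[|\mathcal{C}'|]$ that is independent of $(X,Y)$ and hence of $\mathcal{C}'$. I would then let the server transmit
\begin{equation*}
\mathcal{C} \;=\; \mathcal{C}' \oplus W' \pmod{|\mathcal{C}'|},
\end{equation*}
which is exactly the one-time pad construction of \cite[Lemma~1]{kostala2} applied to the random variable $\mathcal{C}'$ rather than to $X$. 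By that lemma, $\mathcal{C}$ is uniformly distributed on $[|\mathcal{C}'|]$ and is statistically independent of $\mathcal{C}'$, hence also independent of any random variable jointly distributed with $\mathcal{C}'$. In particular $I(\mathcal{C};X)=0$, so the perfect privacy requirement \eqref{lashwi} is met.

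The length bound is immediate: $\mathcal{C}$ takes values in $[|\mathcal{C}'|]$, so representing it with any fixed-length prefix-free binary code uses at most $\ceil{\log|\mathcal{C}'|}$ bits, yielding $\mathbb{E}[\mathbb{L}(\mathcal{C})]\le \ceil{\log|\mathcal{C}'|}$, which gives \eqref{kos3}. For decodability, each user $k$ first recovers $\mathcal{C}'=\mathcal{C}\ominus W'$ using the shared key, and then proceeds exactly as in the coded caching scheme of \cite[Theorem~1]{maddah1}: using $\mathcal{C}'$ together with the locally cached subfiles in $Z_k$, every missing subfile $Y_{d_k,\gamma_i\setminus\{k\}}$ can be extracted from the XOR term $C_{\gamma_i}$ by cancelling the other subfiles (all cached at user $k$ since $\gamma_i\setminus\{k\}\not\ni k$), so $Y_{d_k}$ is reconstructed with probability one, satisfying \eqref{choon}.

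There is no real obstacle here beyond the bookkeeping of checking the three conditions; the key observation that makes the argument work is that the hypothesis $|\mathcal{C}'|\le|\mathcal{X}|$ guarantees the shared key of size $|\mathcal{X}|$ is already large enough to perfectly mask $\mathcal{C}'$ by a one-time pad. This route entirely sidesteps the two-part construction used for \eqref{log}--\eqref{koonwi22}, and in particular does not require $P_{X\mathcal{C}'}\in\hat{\mathcal{P}}_{X\mathcal{C}'}$ or any common-information structure, which is why the statement is phrased for arbitrary joint distributions.
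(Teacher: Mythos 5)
Your argument addresses only the final bound \eqref{kos3}; it leaves the five preceding bounds \eqref{log}, \eqref{ass}, \eqref{mass}, \eqref{koonwi11}, and \eqref{koonwi22} unproved. You explicitly state that your route ``entirely sidesteps the two-part construction used for \eqref{log}--\eqref{koonwi22},'' but those inequalities are part of the theorem and are not optional: \eqref{log} requires replacing the greedy entropy-based coupling with an optimizer of $g_0(P_{X\mathcal{C}'})=h_0(P_{X\mathcal{C}'})$ inside the same two-part code, \eqref{ass} and \eqref{mass} follow from the bounds on $\mathcal{K}(P_{X\mathcal{C}'})$ in \cite[Theorem~1]{zero}, and \eqref{koonwi11}--\eqref{koonwi22} are carried over from Theorem~\ref{th1}. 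Without at least citing those supporting results and explaining how they plug into the two-part construction, the proposal is incomplete as a proof of the stated theorem.

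On the one part you do prove, the construction is essentially the same as the paper's (one-time pad applied directly to $\mathcal{C}'$), but there is a small technical slip. You introduce $W'$ as a ``uniformly distributed sub-key on $[|\mathcal{C}'|]$'' extracted from $W\sim\mathrm{Unif}[|\mathcal{X}|]$. A deterministic function of a uniform variable on $[|\mathcal{X}|]$ is uniform on $[|\mathcal{C}'|]$ only when $|\mathcal{C}'|$ divides $|\mathcal{X}|$, so this extraction step is not valid in general. It is also unnecessary: in \eqref{kos3} the second argument of $\mathbb{L}(\cdot,\cdot)$ is $|\mathcal{C}'|$, i.e.\ the scheme is permitted to use a key of size exactly $T=|\mathcal{C}'|$, so one should simply take $W$ uniform on $\{1,\ldots,|\mathcal{C}'|\}$ directly (the hypothesis $|\mathcal{C}'|\leq|\mathcal{X}|$ serves only to show the needed key is no larger than the $|\mathcal{X}|$-sized key used by the other bounds). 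With that correction, the decodability and perfect-privacy checks you give for \eqref{kos3} are sound and match the paper's argument; the remaining gap is the omitted proof of the other five bounds.
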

 \begin{proof}
 	The proof is based on two-part construction coding and is similar to Theorem \ref{th1} and \cite[Theorem 2]{zero}. To achieve \eqref{log}, we use the solution to $h_0(P_{X\mathcal{C}'})=g_0(P_{X\mathcal{C}'})$ instead of the greedy entropy-based algorithm. Moreover, to achieve \eqref{ass} and \eqref{mass}, we use two-part construction coding and inequalities obtained in \cite[Theorem 1]{zero}. Upper bounds \eqref{koonwi11} and \eqref{koonwi22} are obtained in Theorem \ref{th1}. Finally, to achieve \eqref{kos3}, let the shared key $W$ be independent of $(X,Y)$ and has uniform distribution $\{1,,\ldots,T\}=\{1,\ldots,|\mathcal{C}'|\}$. We construct $\tilde{C}$ using one-time pad coding. We have
 	\begin{align*}
 	\tilde{C}=\mathcal{C}'+W\ \text{mod}\ |\mathcal{Y}|,
 	\end{align*}
 	where $\mathcal{C}'$ is defined in \eqref{cache1} and clearly we have
 	\begin{align}
 	I(\tilde{C};X)=0.
 	\end{align}
 	Then, $\tilde{C}$ is encoded using any lossless code which uses at most $\ceil{\log(|\mathcal{C}|)}$ bits. At decoder side, we first decode $\mathcal{C}'$ using the shared key. We then decode each demanded file by using the cache contents and $\mathcal{C}'$.
 \end{proof}
\begin{remark}
	Clearly, the upper bound obtained in \eqref{kos3} improves the bounds in Theorem \ref{th1}. Since, when $|\mathcal{C}'|\leq |\mathcal{X}|$ we have
	\begin{align}
	\ceil{\log{|\mathcal{C}'|}}\leq \!H(Q^*)\!+\frac{1+\log e}{2}+\!1+\!\ceil{\log (|\mathcal{X}|)}. 
	\end{align} 
\end{remark}
\begin{figure}[]
	\centering
	\includegraphics[scale = .4,trim={0 9cm 0 9cm}]{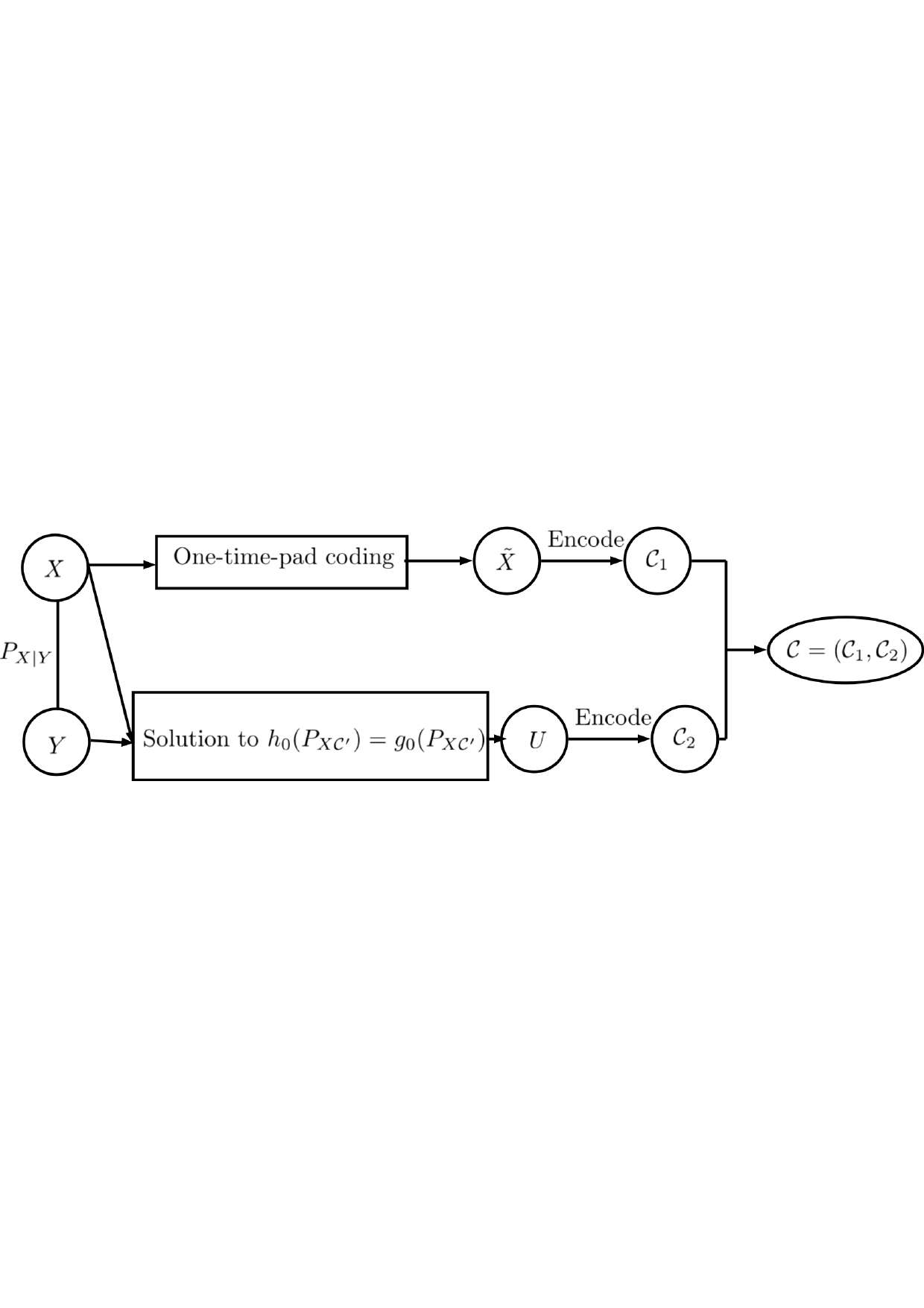}
	\caption{In this work, we use two-part construction coding strategy to send codewords over the channels. We hide the information of $X$ using one-time-pad coding and we then use the solution of $h_0(P_{X\mathcal{C}'})=g_0(P_{X\mathcal{C}'})$ to construct $U$.} 
	\label{kesh11}
\end{figure}
Next, we provide a numerical example that shows \eqref{log} improves \eqref{koonwi}.
\begin{example}
	Let $P_{X|\mathcal{C}'}=\begin{bmatrix}
	1 &1 &1 &0 &0 &0\\
	0 &0 &0 &1 &1 &1
	\end{bmatrix}$ and $P_{\mathcal{C}'}=[\frac{1}{8},\frac{2}{8},\frac{3}{8},\frac{1}{8},\frac{1}{16},\frac{1}{16} ]$. Clearly, in this case $X$ is a deterministic function of $\mathcal{C}'$. Using the linear program proposed in \cite{borz}, we obtain a solution as $P_{\mathcal{C}'|u_1}=[0.75,0,0,0.25,0,0]$, $P_{\mathcal{C}'|u_2}=[0,0.75,0,0.25,0,0]$, $P_{\mathcal{C}'|u_3}=[0,0,0.75,0,0.25,0]$, $P_{\mathcal{C}'|u_4}=[0,0,0.75,0,0,0.25]$ and $P_U=[\frac{1}{6},\frac{1}{3},\frac{1}{4},\frac{1}{4} ]$ which results $H(U)=1.9591$ bits. We have $H(U)=\mathcal{K}(P_{XY})\leq 1.9591$. Moreover, we have
	\begin{align*}
	P_{\mathcal{C}'|X}=\begin{bmatrix}
	&\frac{1}{6} &0\\ &\frac{1}{3} &0\\ &\frac{1}{2} &0\\ &0 &\frac{1}{2}\\&0 &\frac{1}{4}\\&0 &\frac{1}{4}
	\end{bmatrix}.
	\end{align*}
	Using the greedy search algorithm we have $P_{Q*}=[\frac{1}{2}\ \frac{1}{4}\ \frac{1}{6}\ \frac{1}{12}]$, hence, $H(Q^*)=1.7296$. Thus, 
	\begin{align*}
	H(Q^*)+\frac{\log e}{e}=2.2596\geq \mathcal{K}(P_{XY})=1.9591.
	\end{align*}
\end{example} 
 \section{conclusion}
 We have studied a cache-aided compression problem with a perfect privacy constraint, where the information delivered over the shared link during the delivery phase is independent of $X$ that is correlated with the files in the database that can be requested by the users. We have strengthened the previous achievable scheme by using a greedy entropy-based algorithm instead of the FRL. The greedy algorithm aims to solve the minimum entropy functional representation and is optimal within a constant gap. Considering two special cases the obtained bounds are strengthened. Specifically, when the size of the private data is large, we need significantly less bits to send over the channel with less shared key size.  
\section{acknowledgment}
The authors would like to express their gratitude to Yanina Shkel for suggesting the minimum entropy coupling problem, the greedy entropy-based approach, and providing related references.
	\bibliographystyle{IEEEtran}
	\bibliography{IEEEabrv,IZS}

\begin{thebibliography}{10}
\providecommand{\url}[1]{#1}
\csname url@samestyle\endcsname
\providecommand{\newblock}{\relax}
\providecommand{\bibinfo}[2]{#2}
\providecommand{\BIBentrySTDinterwordspacing}{\spaceskip=0pt\relax}
\providecommand{\BIBentryALTinterwordstretchfactor}{4}
\providecommand{\BIBentryALTinterwordspacing}{\spaceskip=\fontdimen2\font plus
\BIBentryALTinterwordstretchfactor\fontdimen3\font minus
  \fontdimen4\font\relax}
\providecommand{\BIBforeignlanguage}[2]{{%
\expandafter\ifx\csname l@#1\endcsname\relax
\typeout{** WARNING: IEEEtran.bst: No hyphenation pattern has been}%
\typeout{** loaded for the language `#1'. Using the pattern for}%
\typeout{** the default language instead.}%
\else
\language=\csname l@#1\endcsname
\fi
#2}}
\providecommand{\BIBdecl}{\relax}
\BIBdecl

\bibitem{rassoul1}
B.~Rassouli and D.~G\"{u}nd\"{u}z, ``On perfect privacy and maximal
  correlation,'' \emph{arXiv preprint arXiv:1712.08500}, 2017.

\bibitem{makhdoumi}
A.~Makhdoumi, S.~Salamatian, N.~Fawaz, and M.~M{\'e}dard, ``From the
  information bottleneck to the privacy funnel,'' in \emph{2014 IEEE
  Information Theory Workshop (ITW 2014)}.\hskip 1em plus 0.5em minus
  0.4em\relax IEEE, 2014, pp. 501--505.

\bibitem{tishby}
N.~Tishby, F.~C. Pereira, and W.~Bialek, ``The information bottleneck method,''
  \emph{arXiv preprint physics/0004057}, 2000.

\bibitem{yamamoto}
H.~Yamamoto, ``A source coding problem for sources with additional outputs to
  keep secret from the receiver or wiretappers (corresp.),'' \emph{IEEE
  Transactions on Information Theory}, vol.~29, no.~6, pp. 918--923, 1983.

\bibitem{sankar}
L.~Sankar, S.~R. Rajagopalan, and H.~V. Poor, ``Utility-privacy tradeoffs in
  databases: An information-theoretic approach,'' \emph{IEEE Transactions on
  Information Forensics and Security}, vol.~8, no.~6, pp. 838--852, 2013.

\bibitem{dwork1}
C.~Dwork, F.~McSherry, K.~Nissim, and A.~Smith, ``Calibrating noise to
  sensitivity in private data analysis,'' in \emph{Theory of cryptography
  conference}.\hskip 1em plus 0.5em minus 0.4em\relax Springer, 2006, pp.
  265--284.

\bibitem{dwork2}
C.~Dwork, ``Differential privacy, in automata, languages and programming,''
  \emph{ser. Lecture Notes in Computer Scienc}, vol. 4052, p. 112, 2006.

\bibitem{oech}
Z.~Li, T.~J. Oechtering, and D.~G{\"u}nd{\"u}z, ``Privacy against a hypothesis
  testing adversary,'' \emph{IEEE Transactions on Information Forensics and
  Security}, vol.~14, no.~6, pp. 1567--1581, 2018.

\bibitem{borade}
S.~Borade and L.~Zheng, ``Euclidean information theory,'' in \emph{2008 IEEE
  International Zurich Seminar on Communications}.\hskip 1em plus 0.5em minus
  0.4em\relax IEEE, 2008, pp. 14--17.

\bibitem{huang}
S.-L. Huang and L.~Zheng, ``Linear information coupling problems,'' in
  \emph{2012 IEEE International Symposium on Information Theory
  Proceedings}.\hskip 1em plus 0.5em minus 0.4em\relax IEEE, 2012, pp.
  1029--1033.

\bibitem{huang2}
S.-L. Huang, C.~Suh, and L.~Zheng, ``Euclidean information theory of
  networks,'' \emph{IEEE Transactions on Information Theory}, vol.~61, no.~12,
  pp. 6795--6814, 2015.

\end{thebibliography}


\begin{thebibliography}{10}
\providecommand{\url}[1]{#1}
\csname url@samestyle\endcsname
\providecommand{\newblock}{\relax}
\providecommand{\bibinfo}[2]{#2}
\providecommand{\BIBentrySTDinterwordspacing}{\spaceskip=0pt\relax}
\providecommand{\BIBentryALTinterwordstretchfactor}{4}
\providecommand{\BIBentryALTinterwordspacing}{\spaceskip=\fontdimen2\font plus
\BIBentryALTinterwordstretchfactor\fontdimen3\font minus
  \fontdimen4\font\relax}
\providecommand{\BIBforeignlanguage}[2]{{%
\expandafter\ifx\csname l@#1\endcsname\relax
\typeout{** WARNING: IEEEtran.bst: No hyphenation pattern has been}%
\typeout{** loaded for the language `#1'. Using the pattern for}%
\typeout{** the default language instead.}%
\else
\language=\csname l@#1\endcsname
\fi
#2}}
\providecommand{\BIBdecl}{\relax}
\BIBdecl

\bibitem{amircache}
A.~Zamani, T.~J. Oechtering, D.~G{\"u}nd{\"u}z, and M.~Skoglund, ``Cache-aided
  private variable-length coding with zero and non-zero leakage,'' in
  \emph{2023 21st International Symposium on Modeling and Optimization in
  Mobile, Ad Hoc, and Wireless Networks (WiOpt)}, 2023, pp. 247--254.

\bibitem{maddah1}
M.~A. Maddah-Ali and U.~Niesen, ``Fundamental limits of caching,'' \emph{IEEE
  Transactions on Information Theory}, vol.~60, no.~5, pp. 2856--2867, 2014.

\bibitem{kostala}
Y.~Y. Shkel, R.~S. Blum, and H.~V. Poor, ``Secrecy by design with applications
  to privacy and compression,'' \emph{IEEE Transactions on Information Theory},
  vol.~67, no.~2, pp. 824--843, 2021.

\bibitem{kocaoglu2017entropic}
M.~Kocaoglu, A.~Dimakis, S.~Vishwanath, and B.~Hassibi, ``Entropic causal
  inference,'' in \emph{Proceedings of the AAAI Conference on Artificial
  Intelligence}, vol.~31, no.~1, 2017.

\bibitem{compton2023minimum}
S.~Compton, D.~Katz, B.~Qi, K.~Greenewald, and M.~Kocaoglu, ``Minimum-entropy
  coupling approximation guarantees beyond the majorization barrier,'' in
  \emph{International Conference on Artificial Intelligence and
  Statistics}.\hskip 1em plus 0.5em minus 0.4em\relax PMLR, 2023, pp.
  10\,445--10\,469.

\bibitem{shkel2023information}
Y.~Y. Shkel and A.~K. Yadav, ``Information spectrum converse for minimum
  entropy couplings and functional representations,'' in \emph{2023 IEEE
  International Symposium on Information Theory (ISIT)}.\hskip 1em plus 0.5em
  minus 0.4em\relax IEEE, 2023, pp. 66--71.

\bibitem{lim}
S.~H. Lim, C.-Y. Wang, and M.~Gastpar, ``Information-theoretic caching: The
  multi-user case,'' \emph{IEEE Transactions on Information Theory}, vol.~63,
  no.~11, pp. 7018--7037, 2017.

\bibitem{wang33}
C.-Y. Wang, S.~H. Lim, and M.~Gastpar, ``A new converse bound for coded
  caching,'' in \emph{Information Theory and Applications Workshop, 2016},
  2016, pp. 1--6.

\bibitem{lu}
Q.~Yu, M.~A. Maddah-Ali, and A.~S. Avestimehr, ``The exact rate-memory tradeoff
  for caching with uncoded prefetching,'' \emph{IEEE Transactions on
  Information Theory}, 2017.

\bibitem{denizjadid}
Q.~Yang and D.~G{\"u}nd{\"u}z, ``Coded caching and content delivery with
  heterogeneous distortion requirements,'' \emph{IEEE Transactions on
  Information Theory}, vol.~64, no.~6, pp. 4347--4364, 2018.

\bibitem{shannon}
C.~E. Shannon, ``Communication theory of secrecy systems,'' \emph{The Bell
  System Technical Journal}, vol.~28, no.~4, pp. 656--715, 1949.

\bibitem{borz}
B.~{Rassouli} and D.~{G\"{u}nd\"{u}z}, ``On perfect privacy,'' \emph{IEEE
  Journal on Selected Areas in Information Theory}, vol.~2, no.~1, pp.
  177--191, 2021.

\bibitem{khodam}
A.~Zamani, T.~J. Oechtering, and M.~Skoglund, ``A design framework for strongly
  $\chi^2$-private data disclosure,'' \emph{IEEE Transactions on Information
  Forensics and Security}, vol.~16, pp. 2312--2325, 2021.

\bibitem{Khodam22}
{A. Zamani, T. J. Oechtering, and M. Skoglund}, ``Data disclosure with non-zero
  leakage and non-invertible leakage matrix,'' \emph{IEEE Transactions on
  Information Forensics and Security}, vol.~17, pp. 165--179, 2022.

\bibitem{kostala2}
Y.~Y. Shkel and H.~V. Poor, ``A compression perspective on secrecy measures,''
  \emph{IEEE Journal on Selected Areas in Information Theory}, vol.~2, no.~1,
  pp. 163--176, 2021.

\bibitem{asoodeh1}
S.~Asoodeh, M.~Diaz, F.~Alajaji, and T.~Linder, ``Information extraction under
  privacy constraints,'' \emph{Information}, vol.~7, no.~1, p.~15, 2016.

\bibitem{king3}
A.~Zamani, T.~J. Oechtering, and M.~Skoglund, ``On the privacy-utility
  trade-off with and without direct access to the private data,'' \emph{IEEE
  Transactions on Information Theory}, vol.~70, no.~3, pp. 2177--2200, 2024.

\bibitem{9457633}
T.-Y. Liu and I.-H. Wang, ``Privacy-utility tradeoff with nonspecific tasks:
  Robust privatization and minimum leakage,'' in \emph{2020 IEEE Information
  Theory Workshop (ITW)}, 2021, pp. 1--5.

\bibitem{e25040679}
\BIBentryALTinterwordspacing
M.~A. Zarrabian, N.~Ding, and P.~Sadeghi, ``On the lift, related privacy
  measures, and applications to privacy–utility trade-offs,'' \emph{Entropy},
  vol.~25, no.~4, 2023. [Online]. Available:
  \url{https://www.mdpi.com/1099-4300/25/4/679}
\BIBentrySTDinterwordspacing

\bibitem{Total}
B.~Rassouli and D.~{G\"{u}nd\"{u}z}, ``Optimal utility-privacy trade-off with
  total variation distance as a privacy measure,'' \emph{IEEE Transactions on
  Information Forensics and Security}, vol.~15, pp. 594--603, 2020.

\bibitem{shahab}
\BIBentryALTinterwordspacing
S.~Asoodeh, M.~Diaz, F.~Alajaji, and T.~Linder, ``Information extraction under
  privacy constraints,'' \emph{Information}, vol.~7, no.~1, 2016. [Online].
  Available: \url{https://www.mdpi.com/2078-2489/7/1/15}
\BIBentrySTDinterwordspacing

\bibitem{makhdoumi}
A.~Makhdoumi, S.~Salamatian, N.~Fawaz, and M.~M{\'e}dard, ``From the
  information bottleneck to the privacy funnel,'' in \emph{2014 IEEE
  Information Theory Workshop}, 2014, pp. 501--505.

\bibitem{king1}
A.~Zamani, T.~J. Oechtering, and M.~Skoglund, ``Bounds for privacy-utility
  trade-off with non-zero leakage,'' in \emph{2022 IEEE International Symposium
  on Information Theory (ISIT)}, 2022, pp. 620--625.

\bibitem{zero}
A.~Zamani, T.~J. Oechtering, D.~G{\"u}nd{\"u}z, and M.~Skoglund, ``Private
  variable-length coding with zero leakage,'' in \emph{2023 IEEE International
  Workshop on Information Forensics and Security (WIFS)}, 2023, pp. 1--6.

\bibitem{wyner}
A.~D. {Wyner}, ``The wire-tap channel,'' \emph{The Bell System Technical
  Journal}, vol.~54, no.~8, pp. 1355--1387, 1975.

\bibitem{gacs1973common}
P.~G{\'a}cs and J.~K{\"o}rner, ``Common information is far less than mutual
  information,'' \emph{Problems of Control and Information Theory}, vol.~2,
  no.~2, pp. 149--162, 1973.

\end{thebibliography}
\end{document}